\theoremstyle{plain} 
\newtheorem{theorem}{Theorem}[section]
\newtheorem{lemma}[theorem]{Lemma}
\newtheorem{proposition}[theorem]{Proposition}
\newtheorem{definition}{Definition}
\newcommand{\comment}[1]{\textbackslash\!\!\textbackslash {\em #1}}
\tikzstyle{vertex}  = [{fill=blue,circle,draw,inner sep=1pt}]
\title{A Naive Algorithm for Feedback Vertex Set}
\author{Yixin Cao\thanks{Department of Computing, Hong Kong
    Polytechnic University, Hong Kong,
    China. \href{mailto:yixin.cao@polyu.edu.hk} {\tt yixin.cao@polyu.edu.hk}}
 }
\date{}
\begin{document}
\maketitle

\begin{abstract}
Given a graph on $n$ vertices and an integer $k$, the feedback vertex set problem asks for the deletion of at most $k$ vertices to make the graph acyclic.  We show that a greedy branching algorithm, which always branches on an undecided vertex with the largest degree, runs in single-exponential time, i.e., $O(c^k\cdot n^2)$ for some constant $c$.  
 \end{abstract}

 \section{Introduction}
 All graphs in this paper are undirected and simple.  A graph $G$ is
 given by its vertex set $V(G)$ and edge set $E(G)$, whose
 cardinalities will be denoted by $n$ and $m$ respectively.  A set
 $V_-$ of vertices is a \emph{feedback vertex set} of graph $G$ if
 $G - V_-$ is acyclic, i.e., being a forest.  Given a graph $G$ and an
 integer $k$, the {feedback vertex set} problem asks whether $G$ has a
 {feedback vertex set} of at most $k$ vertices.

 The feedback vertex set problem was formulated from artificial
 intelligence, where a feedback vertex set is also called a {\em loop
   cutset}.  For each instance of the constraint satisfaction problem
 one can define a constraint graph, and it is well known that the
 problem can be solved in polynomial time when the constraint graph is
 a forest \cite{freuder-82-backtrack-free-search}.  Therefore, one way
 to solve the constraint satisfaction problem is to find first a
 minimum feedback vertex set of the constraint graph, enumerate all
 possible assignments on them, and then solve the remaining instance.
 Given an instance $I$ of the constraint satisfaction problem on $p$
 variables, and a feedback vertex set $V_-$ of the constraint graph,
 this approach can be implemented in $O(p^{|V_-|}\cdot |I|^{O(1)})$
 time \cite{dechter-87-fvs-csp}.  A similar application was found in
 Bayesian inference, also in the area of artificial intelligence
 \cite{pearl-88-probabilistic-reasoning}; more updated material can be
 found in the Ph.D. thesis of Bidyuk \cite{bidyuk-06-dissertation}.

 The feedback vertex set problem is NP-hard \cite{lewis-80-node-deletion-np}.  The
 aforementioned approach for solving the constraint satisfaction
 problem only makes sense when $|V_-|$ is fairly small.  This
 motivates the study of parameterized algorithms for the feedback vertex set
 problem, i.e., algorithms that find a feedback vertex set of size at most $k$ in
 time $f(k)\cdot n^{O(1)}$.
 Since earlier 1990s, a chain of parameterized algorithms have been
 reported in literature; for a complete list we refer to
 \cite{cao-15-ufvs}.  Instead of providing a new and improved
 algorithm, this paper considers a naive branching algorithm that {\em
   should have been discovered decades ago}.

 A trivial branching algorithm will work as follows.  It picks a
 vertex and branches on either including it in the solution $V_-$
 (i.e., deleting it from $G$), or marking it ``undeletable,'' until
 the remaining graph is already a forest.  This algorithm however
 takes $O(2^n)$ time.  A (rather informal) observation is that a
 vertex of a larger degree has a larger chance to be in a minimum
 feedback vertex set, thereby inspiring the following two-phase greedy
 algorithm for solving the feedback vertex set problem.
 If there are undecided vertices of degree larger than two after some
 preprocessing, then it always branches on an undecided vertex with
 the largest degree.  Believe it or not, this greedy algorithm,
 implemented in its most naive way, already beats most previous
 algorithms for this problem.

 \begin{theorem}\label{thm:main}
   The greedy algorithm can be implemented in $O(8^k\cdot n^2)$ time.
 \end{theorem}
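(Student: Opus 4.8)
The plan is to treat the algorithm as three pieces---a polynomial‑time preprocessing phase, the greedy branching rule, and a polynomial‑time endgame---and then to bound the recursion by a potential that combines the budget $k$ with a measure of the ``undeletable'' part of the instance.

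\emph{Setting up the instance.} Throughout the recursion I would keep the vertices partitioned into a set already committed to the solution (each commitment having decreased $k$), a set $T$ of vertices marked undeletable, and the remaining undecided set $U$, maintaining the invariant that the induced subgraph $G[T]$ is a forest. The preprocessing consists of the familiar safe reductions: remove isolated and degree‑one vertices; commit a vertex carrying a self‑loop and decrease $k$; given a pair of parallel edges, reject if both endpoints lie in $T$ and otherwise commit the endpoint outside $T$ and decrease $k$; commit an undecided vertex that has two neighbours inside a single tree of $G[T]$ and decrease $k$; and discard any tree of $G[T]$ having at most one neighbour in $U$, since it lies on no cycle. (One may additionally suppress undecided degree‑two vertices, but I would instead leave the surviving degree‑two vertices for the endgame.) Each rule is safe and runs in time polynomial in $n$, and after exhaustive application every undecided vertex has degree at least two and at most one neighbour in each tree of $G[T]$; verifying these facts is the first, routine step.

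\emph{Branching and the endgame.} If some undecided vertex has degree at least three, pick one such vertex $v$ of maximum degree and recurse on the two instances obtained by (a) committing $v$, so $k$ drops by one, and (b) moving $v$ into $T$; this is exhaustive, since in any solution $v$ is either deleted or left in the forest. Otherwise every undecided vertex has degree exactly two, the undecided part is a union of paths and cycles hung on $G[T]$, and the instance is decided immediately: because $G[T]$ is a forest every cycle of the graph then remaining meets $U$, so repeatedly deleting a degree‑two undecided vertex from any surviving cycle lowers the cyclomatic number by exactly one, and the minimum number of such deletions equals that cyclomatic number, which one simply compares with $k$. Correctness of the whole algorithm then follows from safety of the reductions, exhaustiveness of the branching, and correctness of this endgame.

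\emph{Bounding the search tree.} This is the core of the proof and the step I expect to be the genuine obstacle. A ``commit'' branch lowers $k$, so a root‑to‑leaf path contains at most $k$ of them; the difficulty is the ``mark undeletable'' branch, which leaves $k$ untouched. The plan is to find a potential $\Phi$ of the pair $(T,U)$---in the first instance the number of trees of $G[T]$, but almost certainly needing a correction term (counting selected $T$--$U$ edges, or weighting the trees), since marking a vertex undeletable while it has no neighbour in $T$ spawns a new tree and so must be compensated---for which $\mu=k+\Phi$ decreases in \emph{both} branches and the number of ``mark undeletable'' branches on any root‑to‑leaf path is at most $2k$. The maximum‑degree rule is precisely what drives the second branch: after a vertex $v$ of (maximum) degree $d\ge 3$ is moved into $T$, the ensuing reductions must either force a commitment or cause a substantial merging or pruning of the forest accompanied by a drop in the largest undecided degree, and the quantitative balance of these cases against $d\ge 3$, exploiting the invariant that no undecided vertex has two neighbours in one tree of $G[T]$, is what should produce the factor $2$. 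A binary search tree of depth at most $3k$ has at most $2^{3k}=8^k$ leaves; since the reductions and the maximum‑degree selection performed at a node cost $O(n^2)$ time, the algorithm runs in $O(8^k\cdot n^2)$ time.
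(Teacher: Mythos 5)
Your overall architecture (safe reductions, branching on an undecided vertex of maximum degree, a polynomial endgame, and bounding the search-tree depth by $k$ plus the number of ``mark undeletable'' branches) matches the paper's. However, the decisive step is missing. You only \emph{announce a plan} to bound the number of mark-undeletable branches by $2k$, via a potential $\Phi(T,U)$ (number of trees of $G[T]$ plus an unspecified ``correction term'') such that $k+\Phi$ decreases in both branches; you yourself observe that the naive choice fails because marking a vertex with no neighbour in $T$ creates a new tree, and you never exhibit the correction or carry out the ``quantitative balance.'' So the bound $2k$ is unsupported. The paper does not use a per-node potential at all: it uses a global charging argument along a single root-to-leaf path. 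Every vertex $u$ moved into $F$ has degree $d^*(u)\ge 3$ at that moment but must end with degree at most one before it can be deleted, so it must absorb $d^*(u)-2$ ``effective'' degree decrements; each solution vertex $x$ can cause at most $d^*(x)$ effective decrements in total (a short analysis of the cascade of degree-one deletions it triggers); and the greedy maximum-degree rule guarantees $d^*(u)\ge d^*(x)$ whenever $x$ causes a decrement at $u$. Summing $\delta(u,x)/d^*(\cdot)$ over all pairs gives $|F'|\le 3|V_-|$. This amortization over the entire execution path appears essential; a locally monotone potential of the kind you sketch is exactly what the ``new tree'' phenomenon obstructs.

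Second, even granting the charging argument, the algorithm as you specify it cannot yield the factor $2$ and hence not $O(8^k\cdot n^2)$. Your endgame only handles the case where every undecided vertex has degree exactly two, so branching occurs on vertices of degree at least three, and the charging argument then gives only $(d^*(u)-2)/d^*(u)\ge 1/3$, i.e., $|F'|\le 3k$, depth $4k$, and a $16^k$ bound. To reach $8^k$ the paper additionally invokes the known polynomial-time algorithm for instances in which every \emph{undecided} vertex has degree at most three (Lemma~\ref{thm:subcubic}), so that branching happens only when the maximum undecided degree is at least four, whence $(d^*(u)-2)/d^*(u)\ge 1/2$ and $|F'|\le 2k$. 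This ingredient is absent from your proposal, and without it the claimed depth bound of $3k$ does not follow from any argument you give.
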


 The use of the observations on degrees in solving the feedback vertex
 set problem is quite natural.
 Indeed, the research on parameterized algorithms and that on
 approximation algorithms for the feedback vertex set problem have
 undergone a similar process.  Early work used the cycle
 packing-covering duality, and hence ended with
 $O((\log{k})^{O(k)}\cdot n^{O(1)})$-time parameterized algorithms
 \cite{raman-02-fvs} and $O(\log{n})$-ratio approximation algorithms
 \cite{erdos-62-number-disjoint-circuits}, respectively, while the
 first $2$-approximation algorithm uses a similar greedy approach on
 high-degree vertices \cite{bafna-99-approximate-fvs}.  Indeed, all
 the four slightly different $2$-approximation algorithms for this
 problem are based on similar degree observations
 \cite{chudak-98-primal-dual-interpretation, fujito-96-vc-and-fvs}.
 So is the quadratic kernel of
 Thomass\'e~\cite{thomasse-10-kernel-fvs}.  There is also an
 $O(4^k\cdot n)$-time randomized algorithm
 \cite{becker-00-randomized-fvs} based on this idea.  Our greedy
 branching algorithm can be viewed as the de-randomization of this
 randomized algorithm.

 For a reader familiar with parameterized algorithms of the feedback
 vertex set problem, Theorem~\ref{thm:main} may sound somewhat
 surprising.  Deterministic single-exponential algorithms for the
 feedback vertex set problem had been sedulously sought, before
 finally discovered in 2005.  With so many different techniques, some
 very complicated, having been tried toward this end,\footnote{To
   date, the number of parameterized algorithms for feedback vertex
   set published in literature exceeds any other single problem,
   including the more famous vertex cover problem.} it is rather
 interesting that the goal can be achieved in such a naive way.

The significance of single-exponential algorithms for the feedback vertex set
problem lies also in the theoretical interest, for which let us put it
into context.

Together with the vertex cover problem (finding a set $V_-$ of at most
$k$ vertices of a graph $G$ such that $G - V_-$ is edgeless), the
feedback vertex set problem is arguably the most studied parameterized
problem.  However, a simple $O(2^k\cdot (m+n))$-time algorithm for
vertex cover was already known in 1980s \cite{mehlhorn-84-II}.  For
this difference there is a quick and easy explanation from the aspect
of graph modification problems \cite{lewis-80-node-deletion-np,
  cai-96-hereditary-graph-modification}.  Vertex deletion problems ask
for the deletion of a minimum set of vertices from a graph to make it
satisfy specific properties.  The vertex cover problem and the
feedback vertex set problem are precisely vertex deletion problems to,
respectively, the edgeless graphs and acyclic graphs, i.e., forests.
The obstruction (forbidden induced subgraph) for the edgeless graphs
is an single edge, the simplest one that is nontrivial.  On the other
hand, the obstructions for forests are all cycles, which may be
considered the simplest of all those infinite obstructions, for most
of which single-exponential algorithms are quite nontrivial, if
possible at all.

The problems vertex cover and feedback vertex set are also known as
planar-$\cal F$-deletion problems, which, given a graph $G$, a set
$\cal F$ of graphs of which at least one is planar, ask for a minimum
set of vertices whose deletion make the graph $H$-minor-free for every
$H\in\cal F$ \cite{fellows-88-nonconstructive-polynomial}.  They
correspond to the cases with ${\cal F} = \{K_2\}$ and
${\cal F} = \{K_3\}$ respectively.  Recently, Fomin et
al.~\cite{fomin-12-f-deletion} and Kim et
al.~\cite{kim-15-kernel-and-algorithms-via-protrusion-decompositions}
showed that all planar-$\cal F$-deletion problems can be solved in
single-exponential time.  With a huge constant hidden by the big-Oh,
their results, however, are of only theoretical interest.

Yet another way to connect the vertex cover problem and the feedback vertex set
problem is that a graph has treewidth zero if and only if it is
edgeless, and treewidth at most one if and only if it is a forest.
The treewidth-two vertex deletion problem is planar-$\{K_4\}$-deletion
\cite{kim-15-k4-minor-cover}.

\section{The algorithm}

There is no secret in our algorithm, which is presented in Figure~\ref{fig:greedy-alg}, except the recursive form and an extra input $F$, the set of ``undeletable'' vertices.  We say that ($G, k, F$), where $F\subseteq V(G)$ and the solution is only picked from $V(G)\setminus F$, is an \emph{extended instance}; note that to make such an extended instance nontrivial, $F$ needs to induce a forest.
(Indeed, the solution in the original loop cutset problem has to be selected from ``allowed'' vertices, which is exactly the case $F$ comprising all vertices that are not allowed.)  The algorithm can be viewed as two parts, the first (steps 1--4) applying some simple operations when the situation is simple and clear, while the second (steps 5--7) trying both possibilities on whether a vertex $v$ is in a solution.  The operations in the first part are called reductions in the parlance of parameterized algorithms.  The three we use here are standard and well-known,\footnote{For the reader familiar with related algorithms, our reduction steps may seem slightly different from those in literature.  First of all, unlike most algorithms for the problem, our algorithm does not involve multiple edges.  We believe it is simple to keep the graph simple.  As a result, we are not able to eliminate all vertices of degree two: The common way to dispose of a vertex $v$ of degree-2 is to delete $v$ and add an edge between its two neighbors, so called \textit{smoothening}.  Smoothening a vertex whose two neighbors were already adjacent would introduce parallel edges.  Noting that there always exists an optimal solution avoiding $v$, one may move $v$ into $F$ \cite{cao-15-ufvs}, but we prefer the current form because it is simpler and easier to analyze.  It is also easier to be extended in Section~\ref{sec:improvement}.} and their correctness is straightforward; see, e.g., \cite{cao-15-ufvs}.

\begin{figure}[h!]
  \centering
  \tikz\path (0,0) node[draw, text width=.9\textwidth, rectangle, rounded corners, inner xsep=20pt, inner ysep=10pt]{
    \begin{minipage}[t!]{\textwidth}
\small {\bf Algorithm naive-fvs($G, k, F$)}
  \\
  {\sc Input}: a graph $G$, an integer $k$, and a set $F\subseteq V(G)$ inducing a forest.
  \\
  {\sc Output}: a feedback vertex set $V_-\subseteq V(G)\setminus F$
  of size $\leq k$ or ``\textsc{no}.''

  \begin{tabbing}
    AAA\=AAA\=aaa\=Aaa\=MMMMMMAAAAAAAAAAAAA\=A \kill
    0.\> {\bf if} $k<0$  {\bf then return ``\textsc{no}''}; {\bf if} $V(G)= \emptyset$  {\bf then return $\emptyset$};
    \\
    1.\> {\bf if} a vertex $v$ has degree less than two {\bf then}
    \\
    \>\> {\bf return naive-fvs}($G - \{v\}, k, F\setminus \{v\}$);
    \\
    2.\> {\bf if} a vertex $v\in V(G)\setminus F$ has two neighbors in the same component of $G[F]$ {\bf then}
    \\
    \>\> $X\leftarrow$ {\bf naive-fvs}($G - \{v\}, k - 1, F$);
    \\
    \>\>{\bf return} $X\cup\, \{v\}$;
    \\
    3.\> pick a vertex $v$ from $V(G)\setminus F$ with the maximum degree;
    \\
    4.\> {\bf if} $d(v) = 2$ {\bf then}
    \\
    4.1.\>\> $X\leftarrow \emptyset$;
    \\
    4.2.\>\> {\bf while} there is a cycle $C$ in $G$ {\bf then} 
    \\
    \>\>\> take any vertex $x$ in $C\setminus F$;
    \\
    \>\>\> add $x$ to $X$ and delete it from $G$;
    \\
    4.3.\>\> {\bf if} $|X| \le k$ {\bf then return} $X$; {\bf else return} ``\textsc{no}'';
    \\
    5.\> $X\leftarrow$ {\bf naive-fvs}($G - \{v\}, k - 1, F$);
    \comment{case 1: $v\in V_-$.}
    \\
    \> {\bf if} $X$ is not ``\textsc{no}''  {\bf then return}  $X\cup\, \{v\}$;
    \\
    6.\> {\bf return naive-fvs}($G, k, F\cup \{v\}$). \comment{case 2: $v\not\in V_-$.}
\end{tabbing}
    \end{minipage}
  };
\caption{A simple algorithm for feedback vertex set branching in a greedy manner.}
\label{fig:greedy-alg}
\end{figure}

\begin{lemma}
  Calling algorithm \texttt{naive-fvs} with ($G, k, \emptyset$) solves the instance ($G, k$) of the feedback vertex set problem.
\end{lemma}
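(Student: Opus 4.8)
The plan is to prove the statement by induction, reducing it to a local correctness check at each line of the algorithm. First I would settle termination: the quantity $\mu(G,F)=2|V(G)|-|F|$ is a nonnegative integer that strictly decreases along every recursive call — it drops by $2$ in step~1 when $v\notin F$ and in steps~2 and 5, and by $1$ in step~1 when $v\in F$ and in step~6 — while the loop in step~4.2 terminates since $|V(G)|$ decreases each iteration. Hence I may take as induction hypothesis that every recursive call returns a correct answer for its (smaller) extended instance, and it suffices to check that each step transforms the problem correctly, i.e. that $(G,k,F)$ admits a feedback vertex set $V_-\subseteq V(G)\setminus F$ of size at most $k$ if and only if the step's output is such a set (adopting the convention that ``\textsc{no}'' joined with anything remains ``\textsc{no}''). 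The base case is step~0: no set of size $\le k<0$ exists, and if $V(G)=\emptyset$ then $F=\emptyset$ and $\emptyset$ works. Along the way I also need the invariant that $F$ induces a forest, which holds at the top call and is preserved: step~1 only deletes vertices, and step~6 moves $v$ into $F$ only when step~2 has not fired, so $v$ has at most one neighbour in each component of $G[F]$ and closes no cycle. (Note also that when step~3 is reached $V(G)\setminus F\ne\emptyset$, since otherwise $G=G[F]$ would be a forest and step~1 would apply.)

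The reductions and the branching are the standard ones. In step~1 a vertex of degree at most one lies on no cycle, so $G$ and $G-v$ have the same feedback vertex sets among subsets of $V(G)\setminus\{v\}$, and any solution avoiding $F$ automatically avoids the deleted $v$; the reduction is therefore safe. In step~2 the two neighbours of $v$ lying in a common component of $G[F]$ are joined by a path inside $G[F]$, which together with $v$ forms a cycle whose only vertex outside $F$ is $v$; since the sought solution avoids $F$ it must contain $v$, so $(G,k,F)$ is a yes-instance iff $(G-v,k-1,F)$ is, and then $X\cup\{v\}$ is the corresponding solution. Steps~5--6 branch exhaustively on the chosen high-degree vertex $v\in V(G)\setminus F$: if some solution of size $\le k$ contains $v$ it is found in step~5 as $X\cup\{v\}$ with $X$ a solution of $(G-v,k-1,F)$, and otherwise every solution avoids $v$ and hence is a solution of $(G,k,F\cup\{v\})$, found in step~6; soundness in the reverse direction is immediate, and correctness of steps~5 and 6 is what makes the ``\textsc{no}'' convention in step~2 harmless as well.

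The substantive step, and the one I expect to be the main obstacle, is step~4, which resolves the whole instance at once. When it is reached, steps~1 and 2 do not apply and the maximum degree over $V(G)\setminus F$ is two; combined with minimum degree at least two (step~1 failed), this forces every vertex outside $F$ to have degree exactly two, while $G[F]$ is a forest none of whose components receives two neighbours from any single vertex outside $F$. The key claim I would establish is that in this configuration the greedy rule of step~4.2 — repeatedly take any cycle and delete from it an arbitrary vertex not in $F$, which always exists because $G[F]$ is acyclic — yields a \emph{minimum} feedback vertex set avoiding $F$, so that $|X|\le k$ holds exactly when a solution of size $\le k$ exists. To prove this I would pass to the auxiliary multigraph $M$ obtained by suppressing all degree-two vertices of $V(G)\setminus F$: then $V(M)\subseteq F$, the edges coming from maximal chains of suppressed vertices are ``removable'' whereas those internal to $G[F]$ are not, and the non-removable edges form a forest. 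Feedback vertex sets of $G$ avoiding $F$ correspond to sets of removable edges whose deletion makes $M$ acyclic, deleting a vertex of a chain corresponding to deleting the associated edge. Since every maximal acyclic subgraph of a multigraph has the same number of edges, and deleting an edge that lies on a cycle never changes the number of connected components, the greedy deletion process terminates at a spanning forest of $M$ and so removes the minimum possible number of removable edges; translating back gives minimality of $X$. The only delicate point I would single out is this correspondence's bookkeeping — in particular verifying that cycles of $G$ match cycles of $M$, and that running the greedy directly on $G$ mirrors the greedy on $M$ even though the intermediate vertex deletions destroy the ``all degrees two'' property — while everything else reduces to the elementary facts just quoted.
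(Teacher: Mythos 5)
Your proposal is correct in structure and handles steps 0--3 and 5--6 exactly as the paper does (indeed you add two small points the paper leaves implicit: a termination measure, and the verification that step~6 preserves the precondition that $F$ induces a forest, which follows because step~2 did not fire). The genuine divergence is step~4. The paper argues directly on $G$: every vertex of an optimal solution has degree two, so deleting $\ell$ solution vertices leaves $n-\ell$ vertices and at least $m-2\ell$ edges without decreasing the component count $c$, giving the lower bound $\ell\ge m-n+c$; it then checks that the greedy loop deletes exactly $m-n+c$ vertices, each of degree two at deletion time, landing on a forest of $c$ trees. You instead suppress the degree-two vertices outside $F$ into a multigraph $M$ and invoke the spanning-forest exchange property (all maximal acyclic subgraphs have $|V(M)|-c(M)$ edges, and every cycle contains a removable edge since the non-removable ones form a forest). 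These are two phrasings of the same fact --- the minimum number of edge deletions to make a graph acyclic is $m-n+c$ --- and your correspondence between feedback vertex sets avoiding $F$ and sets of removable edges is sound (it is size-non-increasing one way and size-preserving the other, which suffices for minimality). The one concrete item hiding in the ``bookkeeping'' you defer: a component of $G$ that is a cycle lying entirely outside $F$ has \emph{no} vertex surviving into $M$ (you assert $V(M)\subseteq F$), so it vanishes under suppression and the equivalence ``$G$ has a cycle iff $M$ does'' fails literally; such components can occur (e.g.\ $F=\emptyset$ and $G$ a disjoint union of triangles) and must be handled separately, each contributing exactly one forced deletion. The paper's direct count absorbs this case with no extra work, which is the main advantage of its formulation; your multigraph view in exchange makes the matroid structure behind the greedy's optimality explicit.
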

\begin{proof}
  The two termination conditions in step~0 are clearly correct.
  For each recursive call in steps~1 and 2, we show that the original instance is a yes-instance if and only if the new instance is a yes-instance.  
  Note that no vertex is moved to $F$ in these two steps.
  In step~1, the vertex $v$ is not in any cycle, and hence it can be avoided by any solution.  In step~2, there is a cycle consisting of the vertex $v$ and vertices in $F$ (any path connecting these two vertices in $G[F]$), and hence any solution has to contain $v$.

  To argue the correctness of step~4, we show that the solution found in step~4 is optimal.  Let $c$ be the number of components in $G$ and $\ell$ the size of optimal solutions.  Note that every vertex in a solution has degree two, and hence after deleting $\ell$ vertices the graph has $n - \ell$ vertices and at least $m - 2\ell$ edges.  Moreover, deleting vertices from an optimal solution will not decrease the number of components of the graph, we have $(n - \ell) - (m - 2\ell) \ge c$.  Hence, $\ell \ge m - n + c$, and showing $|X| = m - n + c$ would finish the task.
Deleting a vertex of degree $2$ from a cycle never increases the number of components.  Also note that a graph on $c$ components contains a cycle if and only if it has more than $n - c$ edges.  Therefore, the while loop in step~4 would be run exactly $m - n + c$ iterations: After deleting $m - n + c$ vertices, each of degree two when deleted, the remaining graph has $2n - m - c$ vertices and $2n - m - 2 c$ edges, which has to be a forest of $c$ trees.
  
  The last two steps are trivial: If there is a solution containing $v$, then it is found in step~5; otherwise, step~6 always gives the correct answer.
\end{proof}

We now analyze the running time of the algorithm, which is simple but nontrivial.
The execution of the algorithm can be described as a search tree in which each node corresponds to two extended instances of the problem, the \emph{entry instance} and the \emph{exit instance}.   The entry instance of the root node is ($G, k, \emptyset$).   The exit instance of a node is the one after steps~1--3 have been exhaustively applied on the entry instance.  If step~5 is further called, then two children nodes are generated, with entry instances ($G - \{v\}, k - 1, F$) and ($G, k, F\cup \{v\}$) respectively.  (Note that the second child may not be explored by the algorithm, but this is not of our concern.)   A leaf node of the search tree returns either a solution or ``\textsc{no}.''

It is clear that each node {can be processed in polynomial time}, and thus the focus of our analysis is to bound the number of nodes in the search tree.  Since the tree is binary, it suffices to bound its depth. We say that a path from the root of the search tree to a leaf node is an \emph{execution path}.
Let us fix an arbitrary execution path in the search tree of which the leaf node returns a solution $V_-$, and let $F'$ denote all the vertices moved into $F$ by step~6 in this execution path.  The length of this execution path is at most $|V_-| + |F'|$: Each non-leaf node puts at least one vertex to $V_-$ or $F$.  We are allowed to put at most $k$ vertices into $V_-$, i.e., $|V_-|\le k$, and hence our task in the rest of this section is to bound $|F'|$.

Let us start from some elementary facts on trees.  Any tree $T$ satisfies
\[
\sum_{v\in V(T)} d(v) = 2 |E(T)| = 2 |V(T)| - 2 \qquad \text{and}
\qquad \sum_{v\in V(T)} ( d(v) - 2 ) = - 2.
\]
Let $L$ denote the set of leaves of $T$, and $V_3$ the set of vertices of degree at least three.  If $V(T) \ge 2$, then $|L|\ge 2$ and 
\[
  -2 = \sum_{v\in L} ( d(v) - 2 ) + \sum_{v\in V(T)\setminus L} ( d(v) - 2 )
   = \sum_{v\in L} ( -1 ) + \sum_{v\in V_3} ( d(v) - 2 ) = \sum_{v\in V_3} ( d(v) - 2 ) - |L|.
\]
Hence
\begin{equation}   \label{eq:main}
  \sum_{v\in V_3} (d(v) - 2) = |L| - 2.
\end{equation}

The implication of \eqref{eq:main} for our problem is that the more
large-degree vertices ($V_3$) in the final forest $G - V_-$, the more
leaves ($L$) it has.  Every vertex $u\in F'$ will be in the forest.  Since its original degree is at least three, either its degree is decreased to two or less, or there must be some leaves produced to ``balance the equation \eqref{eq:main}.''  On the other hand, however, every vertex has degree at least two when $u$ is moved to $F$.  Therefore, if it is the second case, the leaves have to be ``produced'' in later steps.  The requirement of degree decrements is decided by the degree of $u$, and can be satisfied by vertices deleted later, whose degrees cannot be larger than that of $u$.
This informal observation would enable us to derive the desired lower bound on $|F'|$.

The following invariants will be used in our formal analysis.
\begin{description}
\item [Invariant 1]: During the algorithm, the degree of no vertex can increase.
\item [Invariant 2]: When a recursive call is made in step 5~or 6, there is no vertex of degree 0 or 1 in the graph.
\end{description}

This algorithm never directly deletes any edge, and thus the degree of a vertex decreases only when some of its neighbors are deleted from the graph,---we are talking about the degree in the whole graph $G$, so moving a vertex to $F$ does not change the degree of any vertex.  In particular, only steps 1, 2, 4, and 5 can decrease the degree of vertices.  
By Invariant~2, after a vertex is moved to $F$, step~1 cannot be called before step~2, 4, or 5.  In other words, the degree of a vertex in $F$ decreases only after some vertex put into $V_-$.  We can \emph{attribute} them to vertices $V_-$ as follows.

For a vertex $v\in V_-\cup F'$, we use $d^*(v)$ to denote the degree of $v$ at the moment it is deleted from the graph and put into $V_-$ (step 2, 4, or 5) or moved into $F$ (step 6).  Note that $d_{G}(v)\ge d^*(v)$ by Invariant~1, and $d^*(v)\ge 3$ when $v\in F'$.
Let $x_1$, $x_2$, $\ldots$, $x_{|V_-|}$ be the vertices in $V_-$, in the order of them being put into $V_-$, and let ($G_i, k_i, F_i$) be the exit instance in the node of the search tree corresponding to $x_i$.

\begin{definition}
We say that the decrements of the degree of a vertex $u \in F'$ from $d^*(u)$ to $2$ are \emph{effective}, and an effective decrement is \emph{incurred by $x_i\in V_-$} if it happens between deleting $x_i$ and $x_{i+1}$, or after deleting $x_{|V_-|}$ if $i = {|V_-|}$.  Let $\delta(u, x_i)$ denote the number of effective decrements of $u$ incurred by $x_i$.
\end{definition}
{Note that $\delta(u, x_i)$ may be larger than $1$.}
  It is worth stressing that we do not count the degree decrements of $u$ before it is moved into $F$.  Therefore, $\delta(u, x_i)$ can be positive only when $u$ is in $F$ when $x_i$ is deleted, i.e., $u\in F_i$ and hence $d_{G_i}(u)\ge 2$:
\[
  \delta(u, x_i) =
  \begin{cases}
    d_{G_i}(u) - \max\{ d_{G_{i + 1}}(u), 2\} &\text{ when }u\in F_i,
  \\
  0 &\text{ otherwise.} 
  \end{cases}
\]
\begin{proposition}\label{lem:degree-relation}
  For any $u\in F'$ and $x_i\in V_-$, if $\delta(u, x_i)> 0$ then $d^*(u) \ge d^*(x_i)$.
\end{proposition}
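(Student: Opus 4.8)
The plan is to rewind the execution along the fixed path to the node at which $u$ was moved into $F$, and to cash in there the greedy rule of step~3: $u$ was selected precisely because it had the largest degree among all then-undecided vertices, and I will show that $x_i$ was one of those candidates, so its degree at that moment was at most $d^*(u)$; degree monotonicity (Invariant~1) then pushes this down to $d^*(x_i)$.

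Concretely I would proceed as follows. First, note the immediate implication $\delta(u,x_i)>0\Rightarrow u\in F_i$, which is just the first line of the case definition of $\delta$. Second, since a vertex enters $F$ only via step~6, and step~6 moves into $F$ exactly the vertex just chosen in step~3, let $N_u$ be the node at which $u$ is placed into $F$ and let $H$ be the graph of its exit instance; then $d^*(u)=d_H(u)$ and $u$ is a maximum-degree vertex of $V(H)\setminus F(H)$. Third, I would argue that $N_u$ is a \emph{proper} ancestor, along the execution path, of the node $N_i$ whose exit instance is $(G_i,k_i,F_i)$: since $u\in F_i$, $N_u$ is $N_i$ or above it; and $N_u\ne N_i$ because, whichever of steps~2,~4,~5 puts $x_i$ into $V_-$ at $N_i$, the node $N_i$ does not reach step~6 on the execution path (step~2 and step~4 return beforehand, and if $x_i$ is the vertex removed in step~5 then step~5 returned a solution, again before step~6), so no vertex at all enters $F$ at $N_i$. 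Fourth, I would check that $x_i$ genuinely competes with $u$ in step~3 of $N_u$, i.e.\ $x_i\in V(H)\setminus F(H)$: because $x_i$ is deleted (into $V_-$) only at $N_i$, it is present in the graph all along the root-to-$N_i$ path, so it survives the reductions at the ancestor $N_u$ and lies in $V(H)$ (step~1 at $N_u$ would delete $x_i$ for good, step~2 would move it into $V_-$ already at $N_u$---both contradicting that $x_i$ is deleted at $N_i$); and $x_i$ is never in $F$ before being deleted, since a vertex leaves $F$ only by being removed from the graph in step~1, which never feeds $V_-$, whence $x_i\notin F(H)$. Finally the greedy choice at $N_u$ gives $d^*(u)=d_H(u)\ge d_H(x_i)$, and since $x_i$ is deleted only later, Invariant~1 gives $d_H(x_i)\ge d^*(x_i)$; chaining these yields $d^*(u)\ge d^*(x_i)$.

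The main obstacle I anticipate is purely organizational: pinning down ``the node corresponding to $x_i$'' and handling uniformly the three ways (steps~2,~4,~5) a vertex can be placed into $V_-$, while confirming in each case that $N_u$ lies strictly above it. Everything past that is a one-line application of the greedy selection rule together with Invariant~1. As a reassuring special case, when $x_i$ is removed in step~4 one has $d^*(x_i)=2$, and $d^*(u)\ge 3$ always, so the inequality is in fact slack in that case even without the argument above.
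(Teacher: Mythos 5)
Your proof is correct and is exactly the intended argument: the paper states this proposition without a formal proof, relying on precisely the reasoning you spell out (at the node where $u$ enters $F$ via step~6, the greedy rule of step~3 makes $u$ a maximum-degree undecided vertex while $x_i$ is still present and undecided, and Invariant~1 then gives $d^*(u)=d_H(u)\ge d_H(x_i)\ge d^*(x_i)$). Your extra bookkeeping---that $x_i$ never enters $F$ before being deleted, and that the node where $u$ enters $F$ strictly precedes the deletion of $x_i$---is a faithful formalization of what the paper leaves implicit.
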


First, we bound the total number of effective decrements incurred by
$x_i$ for each $x_i\in V_-$.
\begin{lemma}\label{lem:effective-decrements}
  For each $x_i\in V_-$, it holds $\sum_{u\in F'} \delta(u, x_i)\le d^*(x_i)$.
\end{lemma}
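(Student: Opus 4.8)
The plan is to track exactly when the effective decrements incurred by $x_i$ happen: by definition they all occur in the interval of the execution between deleting $x_i$ and deleting $x_{i+1}$ (or after deleting $x_{|V_-|}$). During that interval, the only step that removes a vertex from the graph is step~1 — every application of step~2, 4, or 5 would put a vertex into $V_-$, contradicting that $x_{i+1}$ is the \emph{next} vertex placed in $V_-$. (Step~6 only moves vertices into $F$ and changes no degrees; step~3 picks but does not delete.) So all the relevant degree decrements are caused by the step-1 deletions happening in that interval, together with the single step that deleted $x_i$ itself.

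First I would make precise which deletions can affect the degree of a vertex $u\in F'$ during this interval. A step-1 deletion removes a vertex $w$ of degree at most~$1$ at that moment. If $w$ is a neighbor of $u$, then $u$ loses one from its degree. The key point is that at the moment $x_i$ is deleted, every vertex still in the graph — in particular every $u\in F_i$ — has degree at least~$2$ by Invariant~2 (the recursive call in step~5 that deletes $x_i$ was made from a state satisfying Invariant~2). A vertex $u\in F'$ can only be ``reduced'' in this interval by being a neighbor of either $x_i$ or one of the step-1-deleted vertices $w_1,\dots,w_t$. But each $w_j$ had degree $\le 1$ when deleted, so it had at most one neighbor at that time, and that neighbor can contribute at most $1$ to the total. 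The cleanest way to organize the count is to follow the connected ``pendant'' structure: after $x_i$ is removed, the step-1 process peels off vertices of degree $\le 1$, and each such removal passes at most one unit of degree-decrement ``down the chain.'' I would argue that the total number of units of decrement delivered to the set $F'$ in this interval is at most $d^*(x_i)$, because the process is essentially the iterated deletion of leaves in the forest-like pendant trees hanging off the neighbors of $x_i$, and a careful accounting shows each of the $d^*(x_i)$ edges incident to $x_i$ at deletion time is ``responsible'' for at most one effective decrement landing on $F'$.

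More carefully: consider the graph $G_i$ (the exit instance at $x_i$'s node). Deleting $x_i$ drops the degree of each of its $d^*(x_i)$ neighbors by one. The subsequent step-1 deletions remove, one at a time, a vertex of current degree $\le 1$; removing such a vertex $w$ decreases the degree of at most one other vertex by one. Thus the multiset of ``degree-decrement events'' after $x_i$'s removal forms a structure where the $d^*(x_i)$ initial events seed at most $d^*(x_i)$ chains, and along each chain at most one event is alive at a time (since a vertex only gets deleted in step~1 once its degree has already fallen to $\le 1$, it can forward at most one decrement). Summing $\delta(u,x_i)$ over $u\in F'$ counts only those decrement events whose target is a vertex of $F'$, and only while that target is still above degree~$2$; in particular no decrement event landing on $u\in F'$ is counted after $u$ reaches degree~$2$. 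Hence the total is bounded by the number of seed events, namely $d^*(x_i)$. I would then combine this with the observation that a vertex in $F'$ is never itself deleted in step~1 during this interval (Invariant~2 guarantees its degree is $\ge 2$), so it only ever acts as a target, never as a forwarder within $F'$, which keeps the bound tight at $d^*(x_i)$.

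The main obstacle I expect is the bookkeeping in the middle step: making the ``chains of decrements'' argument rigorous, i.e., showing that the iterated step-1 deletions cannot deliver more than one net unit of degree loss per edge originally incident to $x_i$. One has to be careful that a single step-1 deletion does not simultaneously reduce two vertices of $F'$ (it cannot — the deleted vertex has $\le 1$ neighbor), and that decrements of a vertex already at degree~$2$ are not counted (they are excluded by the $\max\{\cdot,2\}$ in the definition of $\delta$). A clean way to close this is an amortization or a potential-function argument on $\sum_{w}\max\{d(w)-1,0\}$ restricted to the still-present vertices, but I would first try the direct edge-charging argument sketched above, as it mirrors the leaf-peeling intuition already used in the correctness proof of step~4.
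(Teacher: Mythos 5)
Your argument is essentially the paper's: the paper organizes the same charge as a potential count on degree-one vertices (each step-1 deletion either leaves the number of degree-one vertices unchanged with no effective decrement, or reduces it by one while causing at most one effective decrement), which is exactly your ``chains forwarding one unit at a time'' seeded by the $d^*(x_i)$ edges of $x_i$. One small inaccuracy: your side remark that a vertex of $F'$ is never itself deleted by step~1 during this interval is false --- Invariant~2 only guarantees degree at least two at the moment $x_i$ is deleted, and a vertex of $F'$ can be cascaded down to degree $\le 1$ and removed by step~1 later in the same interval --- but this remark is not load-bearing, since such a vertex then forwards at most one unit like any other chain vertex and its decrements below degree two are not counted as effective.
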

\begin{proof}
  Recall that all effective decrements incurred by $x_i$ happen after deleting $x_i$ from $G_i$.  If $d_{G_i}(v) > 2$ for every vertex $v\in N_{G_i}(x_i)$, then the deletion of $x_i$ will not make the degree of any vertex smaller than two.  Therefore, step~1 will not be called before putting the next vertex into $V_-$.  The degree of each vertex in $N_{G_i}(x_i)$ decreases by one, and the total number of effective decrements incurred by $x_i$ is thus at most $d^*(x_i)$.

  In the rest $d_{G_i}(v) = 2$ for some $v\in N_{G_i}(x_i)$, and it becomes $1$ with the deletion of $x_i$.  This decrement is not effective, but it will trigger step~1, which may subsequently lead to effective decrements.
  Let $d$ denote the number of degree-2 neighbors of $x_i$ in $G_i$.  After the deletion of $x_i$, all of them have degree one, and there is no other vertex having degree one in $G_i - \{x_i\}$ (Invariant 2).  We consider the application of step~1, and let $x$ be the vertex deleted.  If the only neighbor of $x$ has degree two when this step is executed, then its degree becomes $1$ after the deletion of $x$, and hence the number of degree-1 vertices is not changed.  Otherwise, there is one less vertex of degree 1 but there may be  one effective decrement (only when the only neighbor of $x$ is in $F$ and has degree at least three).  Therefore, when step~1 is no longer applicable, the total number of effective decrements is at most $d^*(x_i) - d + d = d^*(x_i)$.
\end{proof}

We are now ready to bound the number of calls of step~6 made in this execution path, i.e.,  the size of $F'$, by the size of $V_-$.  This is exactly the place the greedy order of branching plays the magic.  
\begin{lemma}\label{thm:bound}
  In an execution path that leads to a solution, $|F'| \le 3|V_-|$.
\end{lemma}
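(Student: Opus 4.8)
The plan is to spread a total weight of at least $|F'|$ over $V_-$ so that no vertex of $V_-$ carries more than $3$. Since $d^*(u)\ge3$ for every $u\in F'$, each term of $\sum_{u\in F'}(d^*(u)-2)$ is at least $1$, so it suffices to spread the weight $\sum_{u\in F'}(d^*(u)-2)$. For $u\in F'$ split $d^*(u)-2=e(u)+r(u)$, where $e(u)=\sum_i\delta(u,x_i)$ is the total number of effective decrements of $u$; telescoping the formula for $\delta$ gives $e(u)=d^*(u)-\max\{d_{\mathrm{last}}(u),2\}$, with $d_{\mathrm{last}}(u)$ the degree of $u$ when it leaves the graph (or at the very end), so $r(u)=\max\{d_{\mathrm{last}}(u),2\}-2$. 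In particular $r(u)>0$ can occur only when the execution path terminates through step~4 and $u$ survives there with degree at least $3$; otherwise $r(u)=0$ and $e(u)=d^*(u)-2$.

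To spread $\sum_u e(u)$ I use the greedy order through a fractional charging: give $x_i\in V_-$ the weight $\tfrac{3}{d^*(x_i)}\sum_{u\in F'}\delta(u,x_i)$, which is at most $3$ by Lemma~\ref{lem:effective-decrements}. By Proposition~\ref{lem:degree-relation} every $x_i$ with $\delta(u,x_i)>0$ has $d^*(x_i)\le d^*(u)$, so the part of this weight attributable to a single $u$ totals at least $3\sum_i\delta(u,x_i)/d^*(u)=3e(u)/d^*(u)$. When $r(u)=0$ this equals $3(d^*(u)-2)/d^*(u)\ge1$ — and this is exactly where the constant $3$ comes from, since $3(t-2)\ge t$ iff $t\ge3$. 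Hence if the path never enters step~4 we are done: each $u\in F'$ is paid for with at least one unit, and each $x_i$ carries at most $3$.

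For the remaining $u$ (with $r(u)>0$) I bring in \eqref{eq:main}. Here the path ends by calling step~4 on a graph $\Gamma$ in which every non-$F$ vertex has degree $2$; let $\Gamma'$ be the residual forest after step~4's while loop and $X\subseteq V_-$ the set deleted inside it. Two facts about $X$: every deleted vertex has degree exactly $2$ at deletion (it lies on a cycle, and by Invariant~1 cannot exceed its degree in $\Gamma$), and $X$ is independent in $\Gamma$, because deleting one endpoint of an edge inside $X$ would drop the other to degree $1$, taking it off every cycle. Hence the $2|X|$ edge removals of step~4 all land on vertices of $\Gamma'$, and split as $\epsilon+\mu=2|X|$ with $\epsilon$ the effective decrements among them; moreover each leaf of $\Gamma'$ must absorb at least one non-effective decrement (its final drop from degree $2$ to $1$), so $\mu\ge|L(\Gamma')|$. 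Since only vertices still in $F$ can have degree $\ge3$ in $\Gamma'$, the tree identity \eqref{eq:main} (summed over the components of $\Gamma'$) gives $\sum_{u:\,r(u)>0}r(u)=\sum_{v\in V_3(\Gamma')}(d_{\Gamma'}(v)-2)=|L(\Gamma')|-2c\le|L(\Gamma')|$. Now route all this residual weight onto $X$: since $d^*(x)=2$ there, the weight on $X$ is at most $\tfrac32\epsilon+|L(\Gamma')|\le\tfrac32\bigl(2|X|-|L(\Gamma')|\bigr)+|L(\Gamma')|\le3|X|$, while Lemma~\ref{lem:effective-decrements} keeps the weight on $V_-\setminus X$ at most $3\,|V_-\setminus X|$. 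Adding up, the total weight is at most $3|V_-|$ and at least $\sum_{u:\,r(u)=0}\tfrac{3(d^*(u)-2)}{d^*(u)}+\sum_{u:\,r(u)>0}r(u)\ge|F'|$, which is the lemma.

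The step-4 case is the main obstacle: the constant $3$ survives there only because the three facts conspire — step-4 deletions are of degree-two vertices, they form an independent set, and each leaf they create wastes one of their decrements — so the residual weight can share the budget of $X$ without anyone exceeding $3$. Getting this interplay exactly right, rather than the two-line generic argument, is where the care is needed.
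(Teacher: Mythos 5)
Your proof is correct, and in the generic case it is exactly the paper's fractional charging: charge each effective decrement $\delta(u,x_i)$ to $x_i$ at rate $3/d^*(x_i)\le 3/d^*(u)$ wait--- more precisely, use $1/d^*(x_i)\ge 1/d^*(u)$ from Proposition~\ref{lem:degree-relation}, cap each $x_i$'s load by Lemma~\ref{lem:effective-decrements}, and collect $(d^*(u)-2)/d^*(u)\ge 1/3$ per vertex of $F'$. Where you genuinely depart from the paper is the terminal case, and this is worth spelling out. The paper's proof rests on the sentence ``all vertices must be deleted from the graph at the end of the path,'' from which it concludes $\sum_{v\in V_-}\delta(u,v)=d^*(u)-2$ for \emph{every} $u\in F'$. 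That sentence fails when the leaf returns through step~4: the while loop stops as soon as the graph is acyclic, and the residual forest can retain a vertex of $F'$ of degree at least three (e.g., two degree-three $F$-vertices joined by three length-three paths of undecided degree-two vertices, where the loop happens to delete the middle vertices of two paths), in which case that vertex never receives all of its $d^*(u)-2$ effective decrements and the displayed chain of inequalities does not go through as written. You isolate exactly this deficit as $r(u)=\max\{d_{\mathrm{last}}(u),2\}-2$ and pay for it separately, using three correct observations about the step-4 deletions $X$ --- they occur at degree exactly two, they form an independent set so all $2|X|$ decrements land on the residual forest $\Gamma'$, and every leaf of $\Gamma'$ absorbs at least one non-effective decrement --- together with \eqref{eq:main} applied componentwise to get $\sum_u r(u)\le |L(\Gamma')|$ and hence $\tfrac{3}{2}\epsilon+|L(\Gamma')|\le 3|X|$. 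I have checked this accounting and it closes: the aggregate charge on $V_-$ is at most $3|V_-|$ while each $u\in F'$ still receives at least one unit (note that for $x\in X$ the bound is only aggregate, not per-vertex, but that is all the lemma needs). So your argument is not merely an alternative route; it repairs a genuine gap in the paper's proof. (One could instead try to patch the paper by imagining the final forest dismantled by further applications of step~1, but then all those extra effective decrements would be attributed to the last deleted vertex, whose $d^*$ is $2$, and Lemma~\ref{lem:effective-decrements} would no longer bound them --- so an analysis of the residual forest along your lines seems necessary either way.)
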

\begin{proof}
  Since this execution path leads to a solution, all vertices must be deleted from the graph at the end of the path.   In the algorithm, a vertex in $F$ can only be deleted from the graph in step~1, when the degree of the vertex has to be one or zero.  On the other hand, $d^*(u) \ge 3$.  Thus, all the $d^*(u) - 2$ effective decrements must have happened on this vertex, i.e., $\sum_{v\in V_-}\delta(u, v) = d^*(u) - 2$.  Putting everything together, we have
  \begin{align*}
    |V_-| = \sum_{v\in V_-} 1
    &= \sum_{v\in V_-} \frac{d^*(v)}{d^*(v)}
    \\
    &\ge \sum_{v\in V_-} \frac{ 1 }{d^*(v)}
      \sum_{u\in F'} { \delta(u, v) }
      \tag{Lemma~\ref{lem:effective-decrements}}
    \\
    &= \sum_{v\in V_-} 
      \sum_{u\in F'} \frac{ \delta(u, v) }{d^*(v)}
    \\
    &\ge \sum_{v\in
      V_-} \sum_{u\in F'} \frac{ \delta(u, v) }{d^*(u)}
      \tag{Proposition~\ref{lem:degree-relation}}
    \\
    &=\sum_{u\in F'} \frac{ 1 }{d^*(u)} \sum_{v\in V_-} { \delta(u, v) }
    \\
    &= \sum_{u\in F'} \frac{ d^*(u) - 2 }{d^*(u)}
    \\
    &\ge \sum_{u\in F'} \frac{ 1 }{3}
    \tag{$d^*(u)\ge 3$}
    \\
    &= \frac{ |F'| }{ 3 },
  \end{align*}
  and the proof is complete.
\end{proof}

\begin{theorem}
  Algorithm \texttt{naive-fvs} can be implemented in
  $O(16^k\cdot n^2)$ time to decide whether a graph $G$ has a feedback vertex set
  of size at most $k$.
\end{theorem}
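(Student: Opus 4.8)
The plan is to upgrade the depth bound implicit in Lemma~\ref{thm:bound} into a bound on the \emph{entire} search tree, not only on execution paths that end with a solution. First note that, by Lemma~\ref{thm:bound} and $|V_-|\le k$, any execution path returning a solution has length at most $|V_-|+|F'|\le 4k$. On a no-instance — or on a branch of a yes-instance pursued along wrong guesses — a path can be much longer than $4k$ (on a disjoint union of many ``theta graphs'' with $k$ small, the algorithm keeps moving high-degree vertices into $F$ for a long time before step~4 finally reports failure). So the implementation I would analyze aborts the current branch, returning ``\textsc{no},'' as soon as the recursion depth reaches $4k+1$.

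The first real step is to check that this cut-off changes no output. If $(G,k)$ is a yes-instance, fix a feedback vertex set $V_-^{\star}$ of $G$ with $|V_-^{\star}|\le k$ and consider the execution path that takes the step~5 branch (``$v\in V_-$'') exactly when the vertex $v$ chosen in step~3 belongs to $V_-^{\star}$, and the step~6 branch otherwise. Along this path $F$ stays inside $V(G)\setminus V_-^{\star}$, so every vertex that step~2 or step~5 places into $V_-$ belongs to $V_-^{\star}$ — for step~2 because $v$ then lies on a cycle whose other vertices all lie in $F$, forcing $v$ into every feedback vertex set disjoint from $F$ — whence $|V_-|\le|V_-^{\star}|\le k$. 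A short induction, using that steps~1, 2 and~5 preserve the existence of a solution of size at most the remaining budget (the witness being $V_-^{\star}\cap V(G')$ throughout), shows that at every step~4 node $|X|$ is at most that budget, so this path returns a solution. Hence Lemma~\ref{thm:bound} applies to it, its length is at most $4k$, and it survives the cut-off; so the modified algorithm still outputs a solution whenever one exists, and it outputs ``\textsc{no}'' for a no-instance since then no path reaches a solution at all.

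With the cut-off in force the search tree is binary of depth at most $4k$, hence has at most $2^{4k+1}-1=O(16^k)$ nodes, so it remains to bound the work per node by $O(n^2)$. Maintaining $G$, finding a maximum-degree vertex of $V(G)\setminus F$, and running steps~1--2 cost $O(n+m)=O(n^2)$ — using union-find on $G[F]$ and a worklist of candidate vertices so that each deleted vertex is charged only its own degree. For step~4 I would first note that it is reached only when every vertex of $V(G)\setminus F$ has degree exactly two; then $\sum_{v\notin F}d(v)=2(n-|F|)$ and, since $G[F]$ is a forest, $\sum_{v\in F}d(v)\le 2(|F|-1)+\sum_{v\notin F}d(v)\le 2n-2$, so $m\le 2n$. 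Thus the while-loop of step~4 runs $|X|=m-n+c=O(n)$ times, each iteration finding and destroying one cycle in $O(m)=O(n)$ time, for a total of $O(n^2)$. Multiplying, the algorithm runs in $O(16^k)\cdot O(n^2)=O(16^k\cdot n^2)$ time. The main difficulty is the second paragraph: isolating the solution-tracking execution path and verifying that its interaction with the reductions and with step~4 never exhausts the budget prematurely, so that Lemma~\ref{thm:bound} can legitimately be invoked on it.
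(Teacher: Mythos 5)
Your proof is correct and follows essentially the same route as the paper: cut off every execution path once its depth exceeds $4k$ (the paper phrases this as aborting after $3k$ vertices enter $F$), use Lemma~\ref{thm:bound} to guarantee that some solution-returning path survives the cut-off, and multiply the $O(2^{4k+1})$ node count by $O(n^2)$ work per node. You simply spell out two points the paper leaves implicit — the existence of a solution-returning execution path on a yes-instance, and the per-node time bound — so no further changes are needed.
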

\begin{proof}
  If the input graph $G$ has a feedback vertex set of size at most
  $k$, then there must be an execution path that returns a solution,
  and by Lemma~\ref{thm:bound}, the length of this path is at most
  $4 k$.
  Otherwise, all execution paths return ``\textsc{no},'' disregard of their lengths.
  Therefore, we can terminate every execution path after it has put
  $3 k$ vertices into $F$ by returning ``\textsc{no}'' directly.  The new
  search tree would then have depth at most $4 k$.  Clearly, the
  processing in each node can be done in $O(n^2)$ time.  This gives
  the running time $O(2^{4 k + 1}\cdot n^2) = O(16^k\cdot n^2)$.
\end{proof}

\section{An improved running time}\label{sec:improvement}

It is long (but \emph{not} well) known that if the maximum degree of a
graph is at most three, then a minimum feedback vertex set can be found in
polynomial time \cite{furst88, ueno-88-cubic-fvs}.  This can be
extended to the setting that the degree bound holds only for the
undecided vertices i.e., vertices in $V(G)\setminus F$.
\begin{lemma}[\cite{cao-15-ufvs}]\label{thm:subcubic}
  Given a graph $G$ and a set $F$ of vertices such that every vertex
  in $V(G)\setminus F$ has degree at most three, there is a
  polynomial-time algorithm for finding a minimum set $V_-\subseteq
  V(G)\setminus F$ such that $G - V_-$ is a forest.
\end{lemma}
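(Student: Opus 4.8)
The plan is to reduce the problem to the \emph{matroid matching} (matroid parity) problem on a graphic matroid, and then invoke its polynomial-time algorithm. The case $F=\emptyset$ is exactly feedback vertex set on graphs of maximum degree three, already known to be polynomial-time solvable by a reduction of this type \cite{furst88, ueno-88-cubic-fvs}; so the substance of the proof is to absorb the set $F$, whose vertices may have arbitrarily large degree yet must never enter $V_-$.

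First I would normalize. If $G[F]$ is not a forest then no $V_-\subseteq V(G)\setminus F$ works, so assume $G[F]$ is a forest. Any vertex of $V(G)\setminus F$ of degree at most one lies on no cycle, hence some optimal $V_-$ avoids it; move each such vertex into $F$ (this keeps $G[F]$ a forest and changes no degree). Now every vertex of $V(G)\setminus F$ has degree two or three, and it is equivalent to find a \emph{maximum} $U\subseteq V(G)\setminus F$ with $G[F\cup U]$ a forest, returning $V_-=(V(G)\setminus F)\setminus U$. To eliminate $F$, contract each connected component of $G[F]$ to a single vertex, producing a multigraph $G^{*}$ on the contracted vertices $c_{1},\dots,c_{p}$ together with $D:=V(G)\setminus F$. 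This is faithful because contracting a sub-forest neither creates nor destroys a cycle, so $G[F\cup U]$ is a forest iff $G^{*}[\{c_{1},\dots,c_{p}\}\cup U]$ is; moreover every vertex of $D$ retains its degree (at most three) in $G^{*}$, while $c_{1},\dots,c_{p}$ form an independent set.

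Next I would encode $G^{*}$ as a matroid matching instance. Subdivide every edge of $G^{*}$ joining two vertices of $D$ (turning it into a path of length two through a new vertex) and leave the edges incident to the $c_{i}$ intact; take the cycle matroid of the resulting graph $H$. Each $u\in D$ now owns at most three ground-set elements --- for an edge to another vertex of $D$ its half at $u$, and for an edge to some $c_{i}$ that edge itself --- and one pairs these up so that a matroid matching takes all of them (meaning $u\in U$) or none: a single pair when $d(u)=2$, and a constant-size gadget when $d(u)=3$. Then for $U\subseteq D$ the union of the gadgets of the vertices of $U$ is acyclic in $H$ if and only if $G^{*}[\{c_{1},\dots,c_{p}\}\cup U]$ is a forest --- half-edges with a degree-one endpoint never lie on a cycle, and un-subdividing turns the selected set back into the edge set of the induced subgraph --- and the number of chosen pairs equals $|U|$ up to an additive constant. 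A maximum matroid matching of $H$ therefore yields a maximum $U$, and translating back and subtracting from $V(G)\setminus F$ proves the lemma.

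The crux is this last step: designing the degree-three gadget and verifying that the correspondence between acyclic unions of pairs in $H$ and feasible sets $U$ is exact and objective-preserving --- in particular, that the contracted vertices $c_{i}$, which carry the immovable set $F$ and may have huge degree, neither forbid any feasible $U$ nor introduce spurious cycles, and that the construction still yields a \emph{graphic} (hence linearly representable) matroid so the polynomial-time matroid matching algorithm applies. The degree bound on $V(G)\setminus F$ is precisely what keeps each deletable vertex's gadget of bounded size; without it the encoding of ``$u\in U$'' would need gadgets of unbounded size, and no polynomial-time reduction of this kind is known for general graphs --- as it should not be, since feedback vertex set is NP-hard there.
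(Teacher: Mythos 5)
The paper itself does not prove this lemma---it imports it from \cite{cao-15-ufvs}---so the comparison is against that source, whose proof is indeed a reduction to polynomial-time matroid parity in the spirit you describe (going back to \cite{ueno-88-cubic-fvs,furst88}). Your normalization and the contraction of the trees of $G[F]$ are fine and match the known argument. But the proof has a genuine gap exactly where you say ``the crux is this last step'': the degree-three gadget is not a routine detail to be deferred, it \emph{is} the reduction, and the framework you set up cannot support it. A vertex $u$ of degree three owns three ground-set elements, and three elements cannot be partitioned into pairs; matroid parity has no mechanism for ``all three or none.'' The obvious repairs fail: grouping them as a triple gives matroid $3$-parity, which is not polynomial-time solvable in general, and padding with a dummy coloop to form two pairs lets the parity algorithm select one pair of $u$'s without the other, destroying the correspondence with ``$u\in U$.'' Your accounting also breaks independently of this: if degree-two vertices own one pair and degree-three vertices own two, then maximizing the number of selected pairs is not the same as maximizing $|U|$, so ``equals $|U|$ up to an additive constant'' is false.

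The actual gadget does not pair up $u$'s own incident edges at all; it changes the ground set. After subdividing every $D$--$D$ edge and placing the subdivision vertex into $F$ (so that every undecided vertex has all three neighbours among the contracted vertices $c_{i}$, in three distinct trees unless $u$ is forced into $V_-$), one \emph{deletes} $u$ and introduces the two new edges $\{a,b\}$ and $\{a,c\}$ on its neighbourhood $\{a,b,c\}$ as a single parity pair in the graphic matroid of the contracted graph. The point is that attaching a degree-three vertex to a forest keeps it a forest iff its three neighbours lie in three distinct trees, iff the two cherry edges can be added to that forest without creating a cycle; since forests form a matroid this correspondence survives adding all of $U$ simultaneously, each selected vertex contributes exactly one pair, and graphic matroid parity applies. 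Without this (or an equivalent) gadget, and without first forcing every undecided vertex to degree exactly three so that the pair count equals $|U|$, your argument does not go through.
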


Therefore, we can change step~4 of algorithm \texttt{naive-fvs} to the following:

\begin{figure}[h]
  \centering\small
\begin{tabbing}
    AAA\=AAA\=aaa\=Aaa\=MMMMMMAAAAAAAAAAAAA\=A \kill
    4.\> {\bf if} $d(v)\le 3$ {\bf then}
    \\
    4.1.\>\> call Lemma~\ref{thm:subcubic} to find a minimum solution $X$;
    \\
    4.2.\>\> {\bf if} $|X| \le k$ {\bf then return} $X$; {\bf else return} ``\textsc{no}'';
\end{tabbing}
\end{figure}
\noindent Therefore, $d^*(u) \ge 4$ for each vertex $u\in F'$.   As a result, in the last inequality in the proof of
Lemma~\ref{thm:bound}, we can use
$({ d^*(u) - 2 }) / d^*(u) \ge 2 / 4 = 1/ 2$, which implies
$|F'| \le 2 |V_-|$.  The algorithm would then run in $O(8^k\cdot n^{O(1)})$
time.

We conclude this paper by pointing out that the analysis is not tight.  The inequalities in the proof of Lemma~\ref{thm:bound} can be tight only when $d^*(v) = 4$ for all vertices $v\in V_-\cup F$, and more importantly, all the degree decrements incurred by putting a vertex to $V_-$ are effective.  If such a graph exists,---we may assume without loss of generality that it does not contains any vertex of degree two or less,---then all its vertices have degree four, and all neighbors of a vertex $x\in V_-$ are in $F$.  But in such a graph there should be a different solution, and note that our algorithm only explore the subtree rooted at the child node made by step~6 \textit{only if} all the leaves in the other subtree (rooted at the node made by step~5) return ``\textsc{no}.''


\paragraph{Acknowledgment.}  The author would like to thank O-joung
Kwon and Saket Saurabh for pointing out a mistake in the introduction
of the previous version.

\end{document}